\newtheorem{theorem}{Theorem}
\begin{document}
%
\title{\LARGE \bf Distributed Successive Approximation Coding\\using Broadcast Advantage: The Two-Encoder Case}

\author{
\authorblockN{Zichong Chen, Guillermo Barrenetxea, and Martin Vetterli}
\authorblockA{LCAV - School of Computer and Communication Sciences\\
Ecole Polytechnique F\'{e}d\'{e}rale de Lausanne (EPFL), Lausanne CH-1015, Switzerland\\
Email: \{zichong.chen, guillermo.barrenetxea, martin.vetterli\}@epfl.ch
}
\thanks{This research was supported by the National Competence Center in Research on Mobile Information
and Communication Systems (NCCR-MICS, http://www.mics.org), a center supported by the Swiss
National Science Foundation.}
}


%


\maketitle

\begin{abstract}
Traditional distributed source coding rarely considers the possible link between separate encoders. However, the broadcast nature of wireless communication in sensor networks provides a free gossip mechanism which can be used to simplify encoding/decoding and reduce transmission power. Using this broadcast advantage, we present a new two-encoder scheme which imitates the ping-pong game and has a successive approximation structure. For the quadratic Gaussian case, we prove that this scheme is successively refinable on the \textit{\{sum-rate, distortion pair\}} surface, which is characterized by the rate-distortion region of the distributed two-encoder source coding. A potential energy saving over conventional distributed coding is also illustrated. This ping-pong distributed coding idea can be extended to the multiple encoder case and provides the theoretical foundation for a new class of distributed image coding method in wireless scenarios.
\end{abstract}


%

\section{Introduction}
\label{sec1}
The availability of low-cost image sensor chips such as CMOS cameras is shifting the paradigm of sensor network communication from regular and small data transmissions to occasional and large amounts of data communications. To deliver images under the severe energy constraints of wireless sensor networks (e.g., nodes operating on batteries and a solar panel), multiterminal source coding has an important role as it can cut the rate to the theoretical lower limit. One possible topology for a multi-camera network is to avoid multi-hop routing, which can be well modeled as a typical \textit{distributed source coding} setup where $N$ separate cameras transmit correlated images to a common base station (BS).

Looking at the typical multi-camera network setup as in Fig.~\ref{broadcast}, the cameras within the transmission range of the emitting camera overhear the transmitted messages. We can take advantage of this broadcast nature, as it is given for free and can potentially help to simplify the coding procedure and reduce the transmission power. It will be shown later that this broadcast advantage can be used in a successive manner between separate encoders, which also avoids interference (only one node transmits at the same time). Furthermore, since the scale of the local camera network is usually small compared with their distance to the BS, it is practical to assume the channel capacity between neighboring cameras is larger than the capacity between the camera and the BS. Therefore, for simplicity, we assume in this paper that all the overhearing nodes within the transmission range can get the message error-free, as long as it is received at the BS.

\begin{figure}[!ht]
\centering
\subfloat[]{\label{broadcast}\includegraphics[width=2.5in]{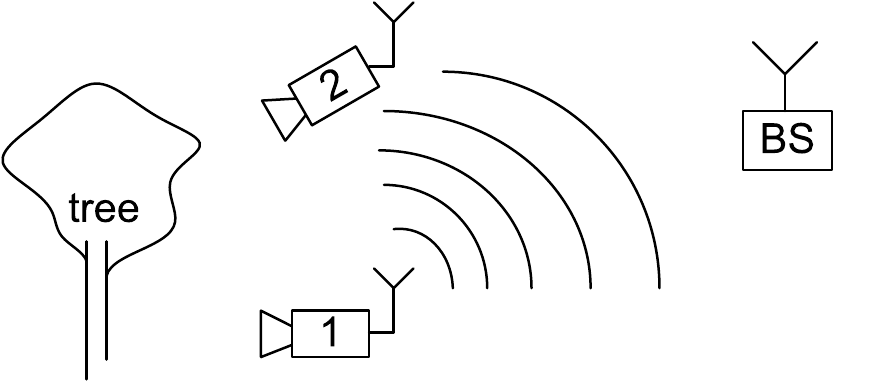}}
\\ \subfloat[]{\label{model}\includegraphics[width=2.5in]{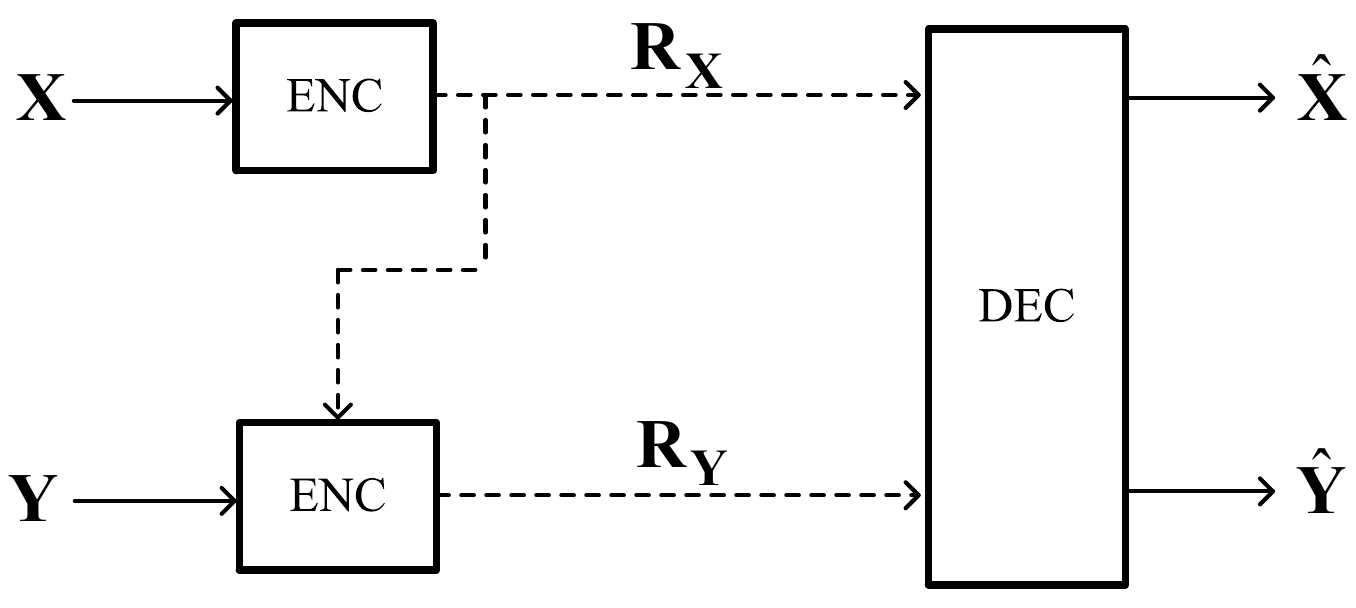}}
\caption{The typical multi-camera network: (a) When Node-1 transmits a message to the base station, Node-2 can overhear it. (b) Setup of the two-encoder distributed source coding with broadcast advantage.}
\end{figure}

To study the rate-distortion region of distributed source coding with broadcast advantage, we start with the simplest configuration of two encoders and one BS that are all within the transmission range of each other (see Fig.~\ref{model}): the source $X$ is encoded at rate $R_X$ without access to source $Y$. Based on the coded version of $X$, source $Y$ is encoded at the rate $R_Y$, which can be meanwhile regarded as a successive description for $X$. An extreme case for this scheme is when $R_X\geq H(X)$, then it reduces to the rate-distortion problem of $Y$ with side information $X$ fully available at encoder and decoder.

Successive approximation coding is another important capability desirable for a camera network. It has been primarily employed in image coding, where one gets better image quality step by step using successive descriptions. This is particularly useful in an energy-limited communication setup, since we can decide at the receiver whether a high resolution image is really needed after the low resolution version is displayed. Inspired by the ping-pong game, we extend Fig.~\ref{model} to a new distributed source coding scheme, which has a successive approximation structure. As Fig.~\ref{extend} shows, the broadcast messages act like a ping-pong ball, which is flipped back and fourth between the two encoders. We call such a scheme \textit{two-encoder Distributed Successive Approximation Coding using Broadcast Advantage (DiSAC2)}.

\begin{figure}[!ht]
\centering
\includegraphics[width=2.6in]{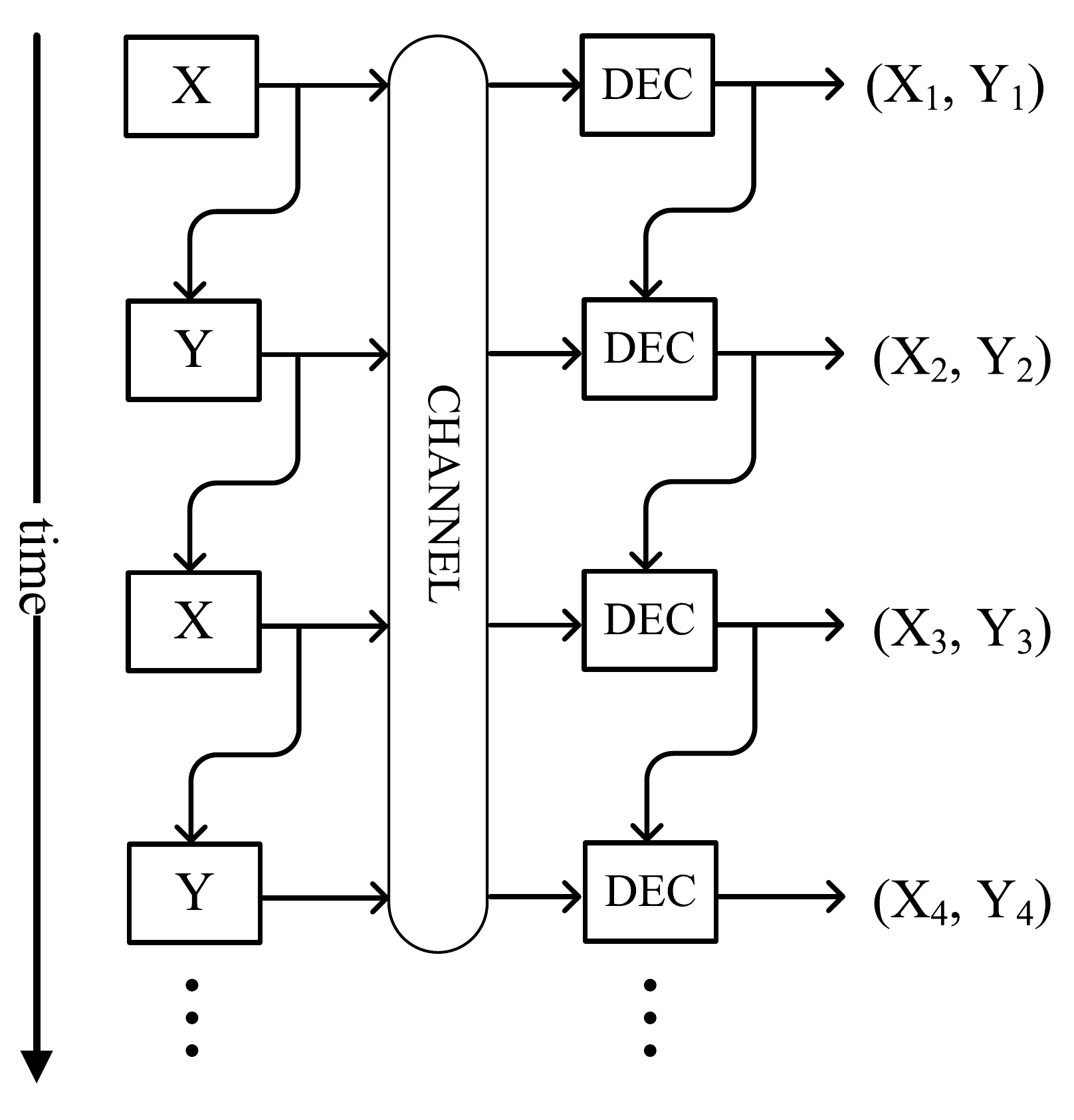}
\caption{Successive approximation structure of the two-encoder distributed source coding using broadcast advantage.}
\label{extend}
\end{figure}

In this paper, we present this new scheme and show that for the quadratic Gaussian case, it is successively refinable on the \textit{\{sum-rate, distortion pair\}} surface, which is characterized by the rate-distortion region of the \textit{distributed two-encoder source coding (DSC2)}~\cite{Wagner2008}. Furthermore, we show that it has the possibility for rate allocation and the potential for energy saving. The extension of the DiSAC2 to three and more encoders is also briefly addressed.

The remainder of this paper is organized as follows: In Sec.~\ref{sec2}, we review the related works about multiterminal source coding and successive approximation coding. In Sec.~\ref{sec3}, we present the precise setup and detailed analysis for a particular case of the DiSAC2 \textit{(three-stage, with Gaussian sources and quadratic distortion)}. Sec.~\ref{sec4} provides an extension of the new scheme to three and more encoders.

\section{Related work}
\label{sec2}
\subsection{Multiterminal source coding}
\label{sec2.1}
The general multiterminal source coding problem~\cite{Berger1978} has been open for thirty years. The rate region for the distributed lossless source coding problem has been solved by Slepian and Wolf~\cite{Slepian1973}. However, the general lossy case is not fully determined yet. Wyner and Ziv~\cite{Wyner1976} solved a special case when one of two sources is entirely known at the decoder. Recently, Wagner et al.~\cite{Wagner2008} gave the rate-distortion region for the two-encoder quadratic Gaussian case. Practical distributed coding schemes~\cite{Zamir2002}~\cite{Girod2005}~\cite{Puri2007} have been developed in recent years, however mostly restricted to sources with strong correlation such as 1-D measurement data, video frames, etc..

Distributed source coding using broadcast advantage is a particular form of the source coding problem with partially separated encoders. The rate-distortion region for two-encoder case was first addressed in \cite{Kaspi1982}, where coding theorems are determined for two cases: (i) one source is reproduced perfectly at the receiver; (ii) one source is perfectly revealed to the other source. \cite{Korner1983} and \cite{Oohama1996} further develop this idea to the case when the encoder in a lossless Slepian-Wolf setup can observe the coded data from the other encoder. It is proved that the admissible rate region is not enlarged. Fig.~\ref{model} is actually the lossy case for this setup, however the answer for its rate-distortion region is still unknown today.

\subsection{Successive approximation coding}
\label{sec2.2}

The optimality of the successive approximation scheme for a single source has been studied by Equitz and Cover~\cite{Equitz1991}: To encode a source $X$, a coarse description $\hat{X}_1$ with R-D pair $(R_1,D_1)$ is refined to a finer description $\hat{X}_2$ with R-D pair $(R_2,D_2)$. This scheme is called successively refinable when any rate pair $(R_1,R_2)$ operates on the rate-distortion function $R(D)$: $R_1=R(D_1)$ and $R_1+R_2=R(D_2)$. It is shown that this optimality is achieved if and only if we can write $\hat{X}_1\to \hat{X}_2 \to X$ as a Markov chain. A Gaussian source with quadratic distortion is one example that is successively refinable on the $R(D)$ curve.

There is related work on successive coding for multiple sources. \cite{Viswanathan2000} proposes the sequential coding of correlated sources for video applications, in which the first source is encoded solely while the subsequent source is encoded based on both sources. This scheme is a weak version of centralized coding as it has access to both sources. However, it does not fully exploit the joint information due to the first step encoding, therefore the minimum sum-rate is sometimes worse than DSC2. Recently, \cite{Behroozi2009} proposed a successive decoding scheme for the distributed source coding problem (no link between encoders). It is proved that successive decoding following a linear fusion could achieve the rate-distortion region of DSC2 for the quadratic Gaussian case. However, the final step of fusion actually breaks the successive decoding structure: all results have to be reconstructed after everything is received.

\section{Three-stage DiSAC2}
\label{sec3}
\subsection{Setup and notations}
\label{sec3.1}
\begin{figure*}[!ht]
\centerline{\subfloat[]{\label{DiSAC2}\includegraphics[width=2.6in]{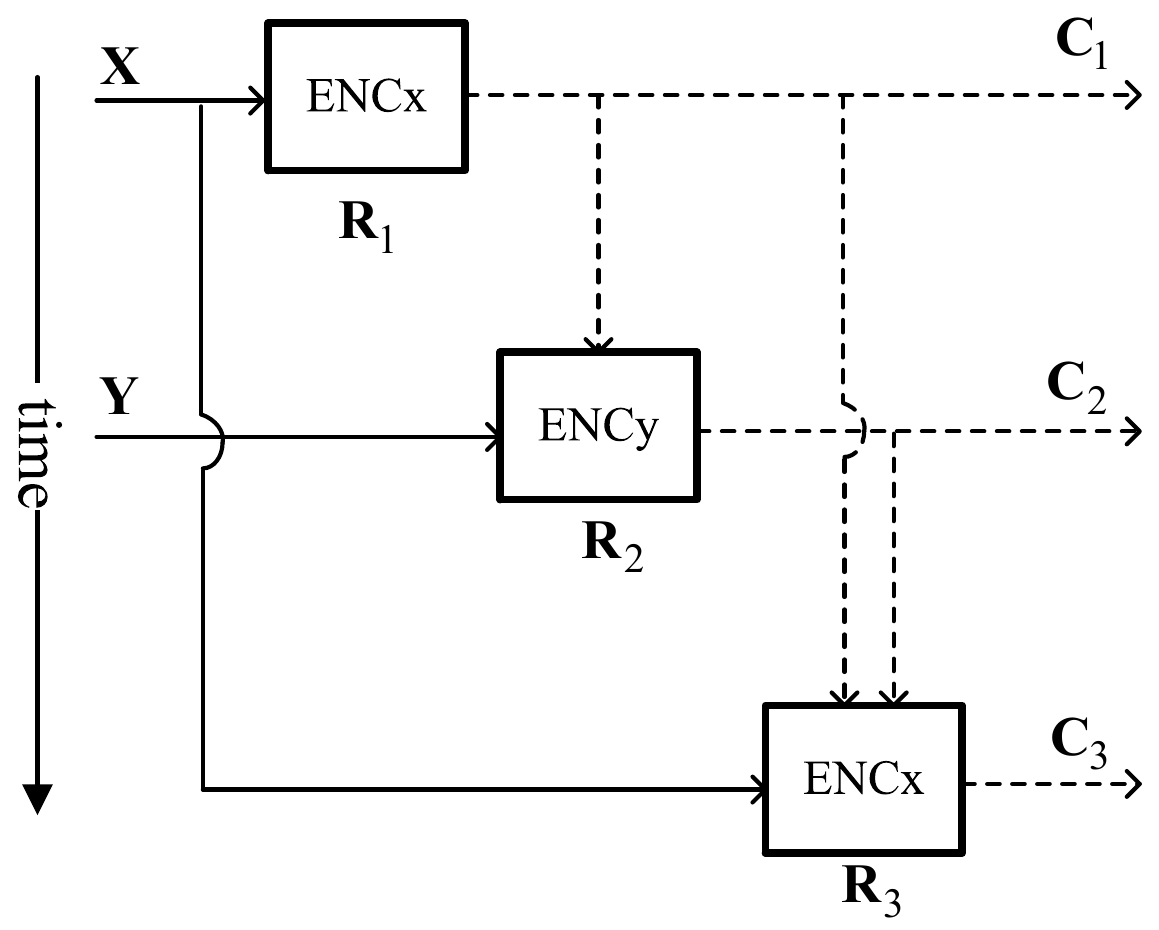}}
\hfil \subfloat[]{\label{multiSR}\includegraphics[width=2.6in]{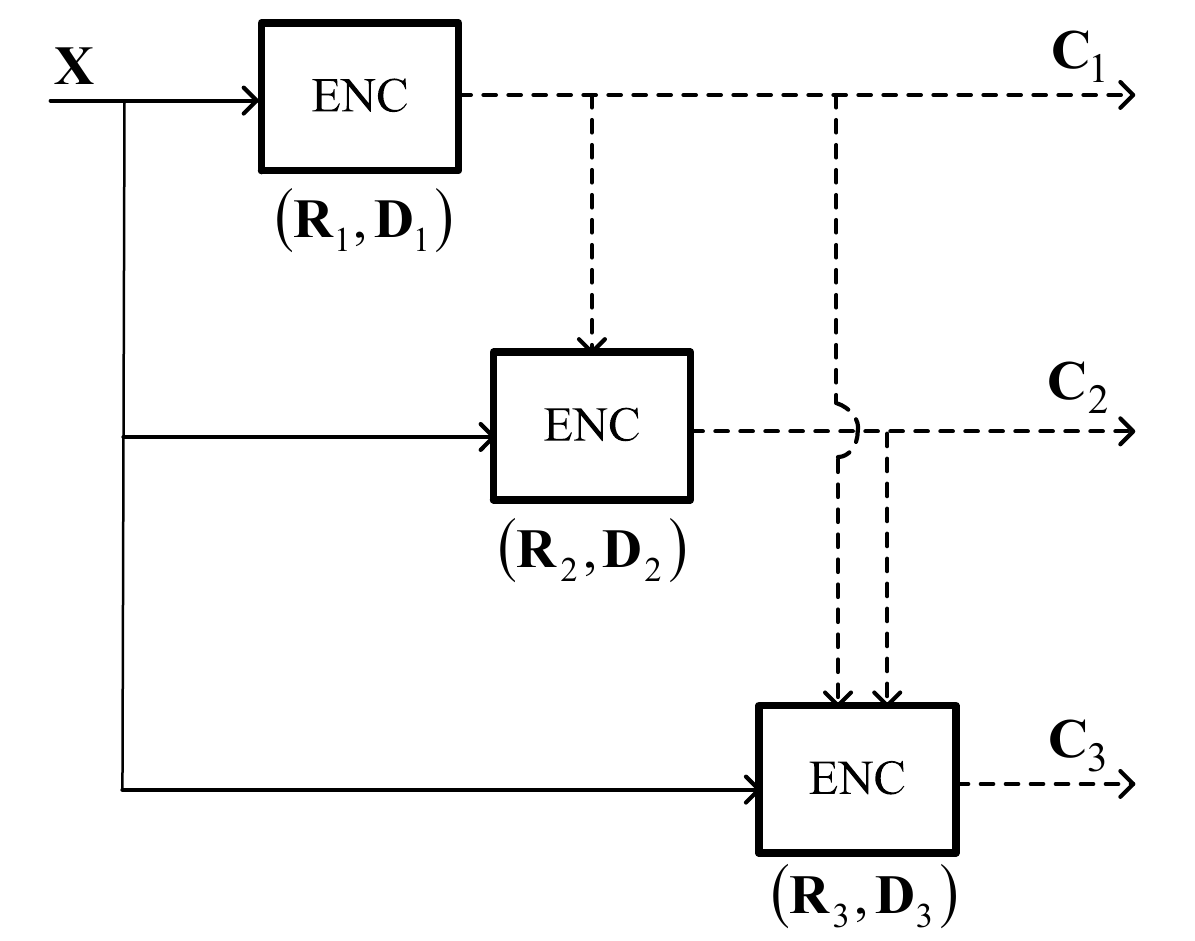}}}
\caption{Successive approximation coding schemes --- encoding part: (a) Three-stage DiSAC2 for two sources. (b) Three-stage successive refinement for a single source.}
\end{figure*}

For the DiSAC2 scheme introduced in Fig.~\ref{extend}, we can specify the number of encoding/decoding stages. Fig.~\ref{DiSAC2} shows the encoding part of a three-stage setup: two separate encoders ENCx and ENCy cooperate using broadcast advantage to convey the correlated sources $(X,Y)$ to the BS. The sketch for the coding procedure is as follows:
\begin{enumerate}
\item At the first stage, ENCx encodes $X$ to the codeword $C_1$ without any knowledge of $Y$, and has a rate of $R_1$. $(X_1, Y_1)$ is reconstructed at the BS after $C_1$ is received.
\item At the second stage, ENCy overhears the coded data $C_1$ which is being transmitted to the BS, and it only transmits the refinement which fully exploits the joint information between the source $Y$ and the coded data $C_1$. $C_2$ is the corresponding codeword sent in the second stage, which has a rate of $R_2$. $(X_2, Y_2)$ is reconstructed at the BS based on $(C_1,C_2)$.
\item Similarly, at the third stage, ENCx encodes $X$ to the codeword $C_3$ based on the coded data $C_1$ and $C_2$, which has a rate of $R_3$. $(X_3, Y_3)$ is reconstructed at the BS based on $(C_1,C_2,C_3)$.
\end{enumerate}

If $X=Y$ in Fig.~\ref{DiSAC2}, then it reduces to a three-stage successive approximation coding of a single source (see Fig.~\ref{multiSR}). As we know from \cite{Equitz1991}, it is successively refinable on the \textit{\{rate, distortion\}} curve, as long as the Markov condition is satisfied. Similar results can be investigated for the DiSAC2 scheme. To give an insight, we specifically discuss the three-stage DiSAC2 (as depicted in Fig.~\ref{DiSAC2}) with Gaussian sources and quadratic distortion:

Let $\mathcal{N}(\boldsymbol{\mu},\Sigma)$ be the notation for a multivariate Gaussian distribution with mean vector $\boldsymbol{\mu}$ and covariance matrix $\Sigma$. $(X,Y)\sim\mathcal{N}(\boldsymbol{\mu},\Sigma)$, where
\begin{equation*}
\boldsymbol{\mu}=(0,0),\quad \Sigma=\left(\begin{array}{cc}
                            1 & \rho \\
                            \rho & 1 \\
                            \end{array}
                        \right)\ \textrm{for }|\rho|<1.
\label{eq3.1.1}
\end{equation*}

The corresponding distortion pairs for the three stages are $(D_{X_1},D_{Y_1})$, $(D_{X_2},D_{Y_2})$, $(D_{X_3},D_{Y_3})$, respectively. The distortions are defined as $\mathbb{E} d(X,\hat{X})$ where $d(\cdot,\cdot)$ is the quadratic error measure, and similarly for $Y$.

\subsection{Coding procedure}
\label{sec3.2}
\subsubsection{First stage}
\label{sec3.2.1}
$X\sim \mathcal{N}(0,1)$ is encoded to $C_1$ using the random codebook argument, with R-D pair
\begin{equation}
R_1=\frac{1}{2}\log \frac{1}{D_{X_1}},\quad D_{X_1}\leq 1.
\label{eq3.2.1.1}
\end{equation}

$X_1$ can be decoded as: $X_1=C_1$, thus
\[\mathbb{E}d(X,X_1)=D_{X_1}.\]

Since $p(X)= \mathcal{N}(0,1)$, the \textit{test channel} in the first stage is
\begin{equation*}
p(X|C_1)=\mathcal{N}(C_1,D_{X_1}).
\end{equation*}
Substituting $X_1=C_1$,
\begin{equation}
p(X|C_1)=\mathcal{N}(X_1,D_{X_1}).
\label{eq3.2.1.2}
\end{equation}

To decode $Y_1$, we calculate the conditional probability
{\allowdisplaybreaks
\begin{align}
p(Y|C_1) &=\int\limits_{-\infty}^{+\infty} p(YX|C_1)\ \textrm{d}X \nonumber \\
 &=\int\limits_{-\infty}^{+\infty} p(Y|C_1X)\cdot p(X|C_1)\ \textrm{d}X \nonumber \\
 &=\int\limits_{-\infty}^{+\infty} p(Y|X)\cdot p(X|C_1)\ \textrm{d}X,
\label{eq3.2.1.3}
\end{align}}
where the third equality follows from the fact that $C_1$ is encoded and decoded from $X$ (a definite function of $X$), thus $p(Y|C_1X)=p(Y|X)$. From the joint distribution of $(X,Y)$,
\begin{equation}
p(Y|X)= \mathcal{N}(\rho X,1-\rho^2).
\label{eq3.2.1.3.1}
\end{equation}
Substituting (\ref{eq3.2.1.3.1}) and (\ref{eq3.2.1.2}) into (\ref{eq3.2.1.3}) leads to
\begin{equation}
p(Y|C_1)= \mathcal{N}(\mu_1,\sigma^2_1),
\label{eq3.2.1.4}
\end{equation}
where
\begin{equation}
\mu_1=\rho X_1,
\label{eq3.2.1.5}
\end{equation}
and
\begin{equation}
\sigma^2_1=D_{X_1} \rho ^2-\rho ^2+1.
\label{eq3.2.1.6}
\end{equation}

$Y_1$ can be decoded as: $Y_1=\mu_1$, thus
\[\mathbb{E}d(Y,Y_1)=\sigma^2_1.\]

To sum up, the distortion pair in the first stage is
\begin{equation}
(D_{X_1},D_{Y_1})=\left(D_{X_1},\sigma^2_1\right).
\label{eq3.2.1.7}
\end{equation}

\subsubsection{Second stage}
\label{sec3.2.2}
$C_1$ is known due to the broadcast advantage, thus according to (\ref{eq3.2.1.4}), $Y-\mu_1\sim \mathcal{N}(0,\sigma^2_1)$. $Y-\mu_1$ is encoded to $C_2$ using the random codebook argument, with R-D pair
\begin{equation}
R_2=\frac{1}{2}\log \frac{\sigma^2_1}{D_{Y_2}},\quad D_{Y_2}\leq \sigma^2_1.
\label{eq3.2.2.1}
\end{equation}

$Y_2$ can be decoded as: $Y_2=C_2+\mu_1$, thus
\begin{align*}
\mathbb{E}d(Y,Y_2)&=\mathbb{E}d(Y-\mu_1,Y_2-\mu_1) \\
&=\mathbb{E}d(Y-\mu_1,C_2) \\
&=D_{Y_2}.
\end{align*}

Similar calculations as Sec.~\ref{sec3.2.1}\footnote{Calculations are omitted due to space, see \cite{Chen2010b} for details.} give
\begin{equation}
p(X|C_1C_2)= \mathcal{N}(\mu_2,\sigma^2_2),
\label{eq3.2.2.2}
\end{equation}
where
\begin{equation}
\mu_2=\frac{X_1 \left(1-\rho ^2\right)+D_{X_1} Y_2 \rho
   }{D_{X_1} \rho ^2-\rho ^2+1},
\label{eq3.2.2.3}
\end{equation}
and
\begin{equation}
\sigma^2_2=\frac{D_{X_1} \left(\left(\rho
   ^2-1\right)^2 - D_{X_1} \rho ^2 \left(-D_{Y_2}+\rho ^2-1\right)\right)}{\left(D_{X_1} \rho ^2-\rho ^2+1\right)^2}.
\label{eq3.2.2.4}
\end{equation}

$X_2$ can be decoded as: $X_2=\mu_2$, thus
\[\mathbb{E}d(X,X_2)=\sigma^2_2.\]

The distortion pair in the second stage is
\begin{equation}
(D_{X_2},D_{Y_2})=\left(\sigma^2_2,D_{Y_2}\right).
\label{eq3.2.2.5}
\end{equation}

\subsubsection{Third stage}
\label{sec3.2.3}
$(C_1,C_2)$ are known due to the broadcast advantage, thus according to (\ref{eq3.2.2.2}), $X-\mu_2\sim \mathcal{N}(0,\sigma^2_2)$. $X-\mu_2$ is encoded to $C_3$ using the random codebook argument, with R-D pair
\begin{equation}
R_3=\frac{1}{2}\log \frac{\sigma^2_2}{D_{X_3}},\quad D_{X_3}\leq \sigma^2_2.
\label{eq3.2.3.1}
\end{equation}

$X_3$ can be decoded as: $X_3=C_3+\mu_2$, thus
\begin{align*}
\mathbb{E}d(X,X_3)&=\mathbb{E}d(X-\mu_2,X_3-\mu_2) \\
&=\mathbb{E}d(X-\mu_2,C_3) \\
&=D_{X_3}.
\end{align*}

Similar calculations as Sec.~\ref{sec3.2.1} give
\begin{equation*}
p(Y|C_1C_2C_3)= \mathcal{N}(\mu_3,\sigma^2_3),
\label{eq3.2.3.2}
\end{equation*}
where $\mu_3$ and $\sigma^2_3$ are given in the Appendix.

$Y_3$ can be decoded as: $Y_3=\mu_3$, thus
\[\mathbb{E}d(Y,Y_3)=\sigma^2_3.\]

The distortion pair in the third stage is
\begin{equation}
(D_{X_3},D_{Y_3})=\left(D_{X_3},\sigma^2_3\right).
\label{eq3.2.3.3}
\end{equation}

\subsubsection{Comments on coding}
\label{sec3.2.4}
First, the whole coding procedure indicates that $\rho, D_{X_1}, D_{Y_2}, D_{X_3}$ must be known both to encoder and decoder before coding, since $(\mu_i,\sigma^2_i) \textrm{ for }i=1,2,3$ are required as pre-known parameters.

Second, special attention must be paid to the distortion constraints $D_{X_1}\leq 1, D_{Y_2}\leq \sigma^2_1, D_{X_3}\leq \sigma^2_2$, which is the assumptions we used for the calculation of the conditional probabilities. Due to the recursive structure of calculations, if any intermediate stage does not meet the distortion constraint, negative rate would occur (which obviously is not allowed) which breaks the recursive chain and leads to an incorrect result.

\subsection{Successive refinement of DiSAC2}
\label{sec3.3}
As we know, a Gaussian source with quadratic distortion is successively refinable on its \textit{\{rate, distortion\}} curve.  In this section, we show that a three-stage DiSAC2 with quadratic Gaussian case is also successively refinable on the \textit{\{sum-rate, distortion pair\}} surface which is characterized by the rate-distortion region of the DSC2~\cite{Wagner2008} (referred to as \textit{Wagner Surface} in this section).

\begin{theorem} For a three-stage DiSAC2 with Gaussian sources and quadratic distortion, $\left\{R_1,(D_{X_1},D_{Y_1})\right\}$, $\left\{R_1+R_2,(D_{X_2},D_{Y_2})\right\}$ and $\left\{R_1+R_2+R_3,(D_{X_3},D_{Y_3})\right\}$ all achieve the \{sum-rate, distortion pair\} surface which is characterized by the rate-distortion region of the DSC2, for any rate triplet $(R_1,R_2,R_3)$.
\label{theoremsuccessive}
\end{theorem}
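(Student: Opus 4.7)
The plan is to verify, for each stage $k\in\{1,2,3\}$, the identity
\[R_1+\cdots+R_k \;=\; R^*_{\mathrm{sum}}(D_{X_k}, D_{Y_k}),\]
where $R^*_{\mathrm{sum}}(\cdot,\cdot)$ is the closed-form sum-rate boundary of the Wagner region~\cite{Wagner2008},
\[R^*_{\mathrm{sum}}(D_X,D_Y) \;=\; \frac{1}{2}\log\!\left[\frac{(1-\rho^2)\,\beta(D_X,D_Y)}{2 D_X D_Y}\right],\]
with $\beta(D_X,D_Y):=1+\sqrt{1+4\rho^2 D_X D_Y/(1-\rho^2)^2}$. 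Since Sec.~\ref{sec3.2} already supplies closed forms for every $R_k$, $D_{X_k}$, and $D_{Y_k}$ in terms of the free parameters $D_{X_1},D_{Y_2},D_{X_3}$, the theorem reduces to three algebraic identities in those parameters.

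For stage one, substituting $D_{Y_1}=\sigma_1^2=(1-\rho^2)+\rho^2 D_{X_1}$ into the radicand of $\beta$ gives a perfect square in $a:=\rho^2 D_{X_1}/(1-\rho^2)$, and the whole Wagner expression collapses to $\tfrac{1}{2}\log(1/D_{X_1})$, matching~\eqref{eq3.2.1.1}. Stage two repeats the trick one level deeper: with $b:=2\rho^2 D_{X_1}D_{Y_2}/[(1-\rho^2)\sigma_1^2]$, expanding $\sigma_2^2$ via~\eqref{eq3.2.2.4} turns the radicand into $(1+b)^2$, and after cancellation the Wagner formula simplifies to $\tfrac{1}{2}\log(\sigma_1^2/(D_{X_1}D_{Y_2}))=R_1+R_2$. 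Stage three follows the same template with the $\sigma_3^2$ recorded in the Appendix.

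The main obstacle is the algebraic bulk of stage three, since $\sigma_3^2$ depends non-trivially on all three free parameters. To avoid repeating unwieldy symbolic manipulation I would replace direct substitution by an inductive mutual-information argument: by construction the joint law of $(X,Y,C_1,\ldots,C_k)$ remains Gaussian and each codeword $C_i$ is a rate-distortion optimal description of the corresponding conditional residual, so $R_k$ equals $I(\text{source}_k;C_k\mid C_1,\ldots,C_{k-1})$; since $C_i$ is a deterministic function of its encoder's input, the cumulative rate telescopes into $I(X,Y;C_1,\ldots,C_k)=\tfrac{1}{2}\log\bigl((1-\rho^2)/\det\Sigma_k\bigr)$, with $\Sigma_k$ the conditional covariance of $(X,Y)$ given the past codewords. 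Matching this against $R^*_{\mathrm{sum}}(D_{X_k},D_{Y_k})$ reduces the task at every stage to the single perfect-square identity $\beta(D_{X_k},D_{Y_k})(1-\rho_k^2)=2$, which is precisely the recursion already verified at stages one and two. Combined with the converse in~\cite{Wagner2008}, this certifies that all three tuples lie on the Wagner \emph{\{sum-rate, distortion pair\}} surface.
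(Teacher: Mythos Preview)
Your direct-substitution plan is exactly the paper's proof: the paper quotes the Wagner sum-rate formula, records the \{sum-rate, distortion pair\} at each stage from Sec.~\ref{sec3.2}, and asserts that combining them yields $R_1=R_{\mathrm{DSC2}}(D_{X_1},D_{Y_1})$, $R_1+R_2=R_{\mathrm{DSC2}}(D_{X_2},D_{Y_2})$, and $R_1+R_2+R_3=R_{\mathrm{DSC2}}(D_{X_3},D_{Y_3})$, with no algebra shown. Your perfect-square manipulations for stages one and two simply fill in what the paper omits.

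The mutual-information alternative you sketch is a genuinely different and more structural route. The telescoping $\sum_{i\le k} R_i = I(X,Y;C_{1:k})=\tfrac12\log\bigl((1-\rho^2)/\det\Sigma_k\bigr)$ is correct, and reducing equality with $R^*_{\mathrm{sum}}$ to $\beta(D_{X_k},D_{Y_k})(1-\rho_k^2)=2$ is valid; this identity is in turn equivalent to the invariance $\mathrm{Cov}(X,Y\mid C_{1:k})/\det\Sigma_k=\rho/(1-\rho^2)$ across stages. The soft spot is your last clause: that invariance is \emph{not} what you actually verified at stages one and two---there you did direct substitution---so calling it ``the recursion already verified'' overstates what has been shown. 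The invariance does hold (a one-line conditional-covariance update after a Gaussian rate--distortion step on one coordinate preserves the ratio of off-diagonal to determinant), and stating and proving that lemma would make the inductive argument self-contained and genuinely bypass the $\sigma_3^2$ algebra. As written, the alternative still quietly leans on the stage-three computation it was meant to replace.
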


\begin{proof}
From \cite{Wagner2008}, the minimum sum-rate for DSC2 is
\begin{equation}
R_{\textrm{DSC2}}(D_X,D_Y) = \frac{1}{2}\log \frac{\left(1-\rho ^2\right) \left(\sqrt{\frac{4 D_{X} D_{Y} \rho ^2}{\left(1-\rho ^2\right)^2}+1}+1\right)}{2 D_{X} D_{Y}}.
\label{eq3.3.1}
\end{equation}

At the first stage, the \{sum-rate, distortion pair\} is known from (\ref{eq3.2.1.1}) and (\ref{eq3.2.1.7}):
\begin{equation}
\left\{\begin{split}
& R_1 = \frac{1}{2}\log \frac{1}{D_{X_1}} \\
& (D_{X_1},D_{Y_1}) = \left(D_{X_1},\sigma^2_1\right).
\end{split}\right.
\label{eq3.3.2}
\end{equation}

Combining (\ref{eq3.3.2}) and (\ref{eq3.3.1}) leads to
\begin{equation*}
R_1=R_{\textrm{DSC2}}(D_{X_1},D_{Y_1}).
\label{eq3.3.3}
\end{equation*}

At the second stage, the \{sum-rate, distortion pair\} is known from (\ref{eq3.2.1.1}), (\ref{eq3.2.2.1}) and (\ref{eq3.2.2.5}):
\begin{equation}
\left\{\begin{split}
& R_1+R_2 = \frac{1}{2}\log \frac{\sigma^2_1}{D_{X_1}D_{Y_2}} \\
& (D_{X_2},D_{Y_2})=\left(\sigma^2_2,D_{Y_2}\right).
\end{split}\right.
\label{eq3.3.4}
\end{equation}

Combining (\ref{eq3.3.4}) and (\ref{eq3.3.1}) leads to
\begin{equation*}
R_1+R_2 = R_{\textrm{DSC2}}(D_{X_2},D_{Y_2}).
\label{eq3.3.5}
\end{equation*}

At the third stage, the \{sum-rate, distortion pair\} is known from (\ref{eq3.2.1.1}), (\ref{eq3.2.2.1}), (\ref{eq3.2.3.1}) and (\ref{eq3.2.3.3}):
\begin{equation}
\left\{\begin{split}
& R_1+R_2+R_3 = \frac{1}{2}\log \frac{\sigma^2_1\sigma^2_2}{D_{X_1}D_{Y_2}D_{X_3}} \\
& (D_{X_3},D_{Y_3})=\left(D_{X_3},\sigma^2_3\right).
\end{split}\right.
\label{eq3.3.6}
\end{equation}

Combining (\ref{eq3.3.6}) and (\ref{eq3.3.1}) leads to
\begin{equation*}
R_1+R_2+R_3 = R_{\textrm{DSC2}}(D_{X_3},D_{Y_3}).
\label{eq3.3.7}
\end{equation*}
\end{proof}

\begin{figure*}[!ht]
\centerline{\subfloat[]{\label{surface1}\includegraphics[width=3.0in]{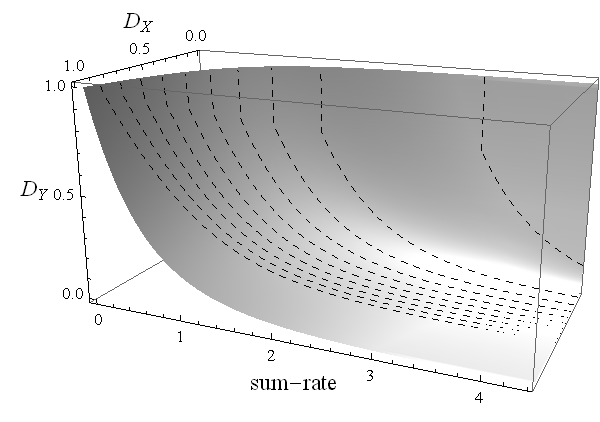}}
\\ \subfloat[]{\label{surface2}\includegraphics[width=3.0in]{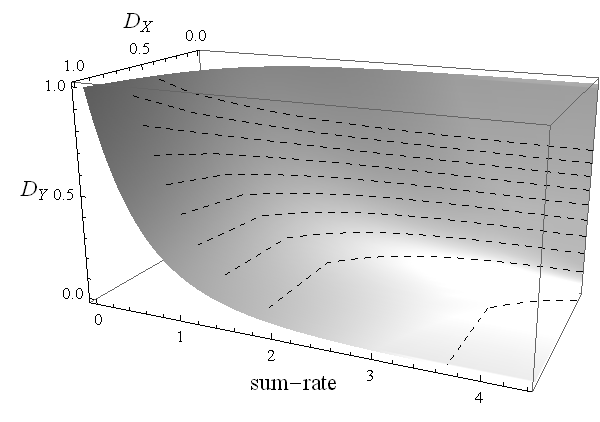}}}
\caption{Successive refinement of the DiSAC2: (a) Given a fixed $R_1$, the DiSAC2 with any additional rate $R_2$ operates along a one-dimensional curve on the \textit{Wagner Surface} (represented by the gray surface). The parameters chosen are $\rho=0.6$, and 10 different $R_1$. (b) Given a fixed $R_1+R_2$, the DiSAC2 with any additional rate $R_3$ operates along a one-dimensional curve on the \textit{Wagner Surface}. The parameters chosen are $R_1=0.5, \rho=0.6$, and 10 different $R_2$.}
\end{figure*}

Fig.~\ref{surface1} gives a visual illustration of Theorem~\ref{theoremsuccessive}: We choose an initial operating point $\left\{R_1,(D_{X_1},D_{Y_1})\right\}$ at the first stage of DiSAC2, and then send an additional rate $R_2$ at the second stage. As the dashed curves on the gray surface suggest, $\left\{R_1+R_2,(D_{X_2},D_{Y_2})\right\}$ with any $R_2$ operates along a one-dimensional curve on the \textit{Wagner Surface}. 10 different initial points chosen give 10 curves verifying the same idea: the DiSAC2 is successively refinable from the first stage to the second stage. A similar plot (Fig.~\ref{surface2}) shows that the DiSAC2 is also successively refinable from the second stage to the third stage.

\subsection{Additional theoretical properties}
\label{sec3.4}
\subsubsection{Feasible region of the distortion pair}
\label{sec3.4.1}
As we mentioned in Sec.~\ref{sec3.2.4}, the distortion constraints $D_{X_1}\leq 1, D_{Y_2}\leq \sigma^2_1, D_{X_3}\leq \sigma^2_2$ have to be satisfied. It is shown in this section that given a fixed $\rho$, not all distortion pairs are feasible. An extreme situation is when $\rho=1$ ($Y$ is just a copy of $X$), the feasible region of the distortion pair (abbreviated as \textit{feasible region} in the following) for the DiSAC2 is restricted to $D_{X_i}=D_{Y_i}$.

\begin{figure}[!hb]
\centering
\subfloat[Second stage, $\rho=0.3$.]{\label{feasible1}\includegraphics[width=1.7in]{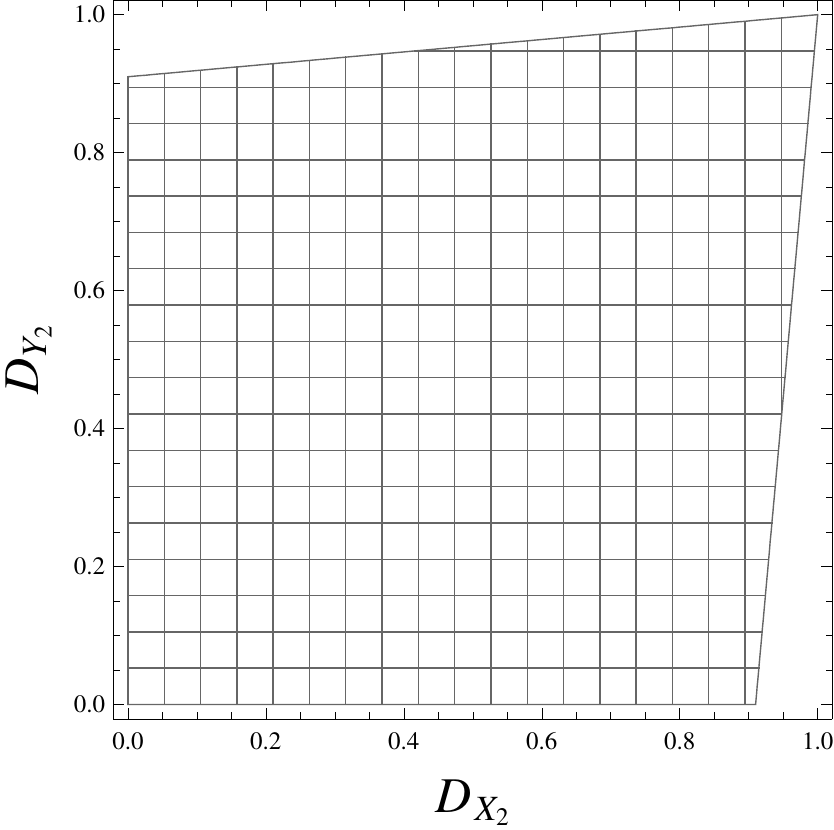}}
\hfil  \subfloat[Second stage, $\rho=0.7$.]{\label{feasible2}\includegraphics[width=1.7in]{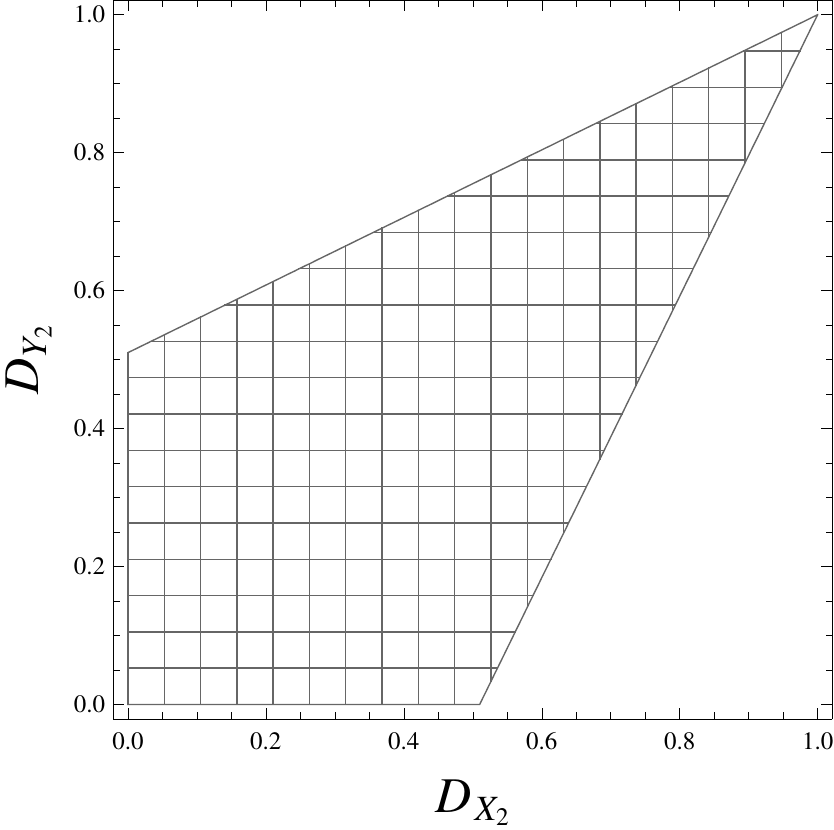}}
\\ \subfloat[Second stage, $\rho=0.9$.]{\label{feasible3}\includegraphics[width=1.7in]{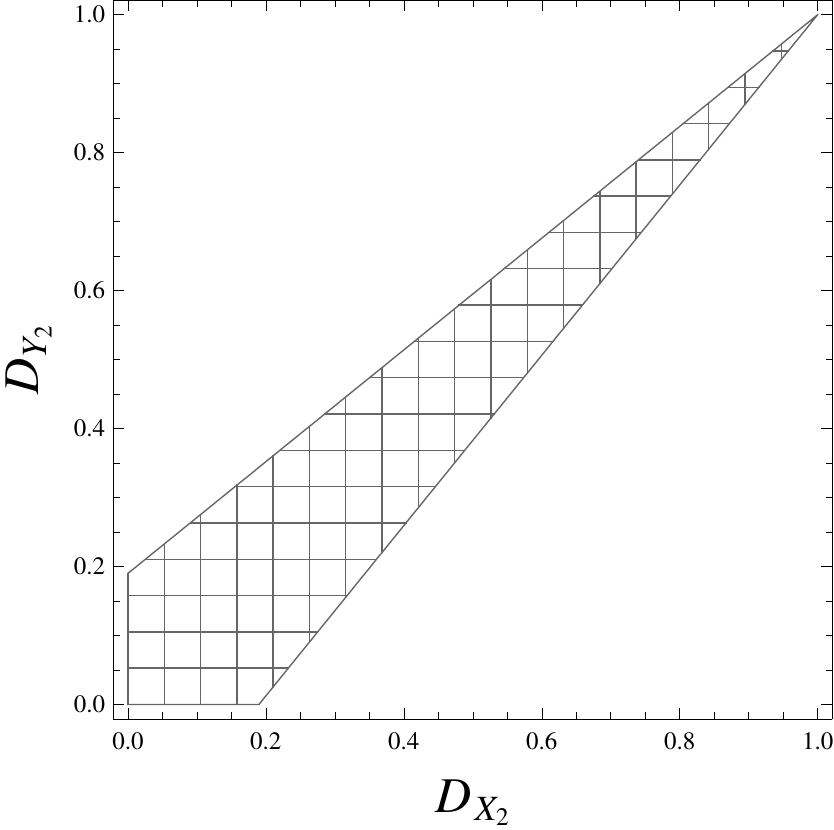}}
\hfil  \subfloat[Third stage, $\rho=0.7$.]{\label{feasible4}\includegraphics[width=1.7in]{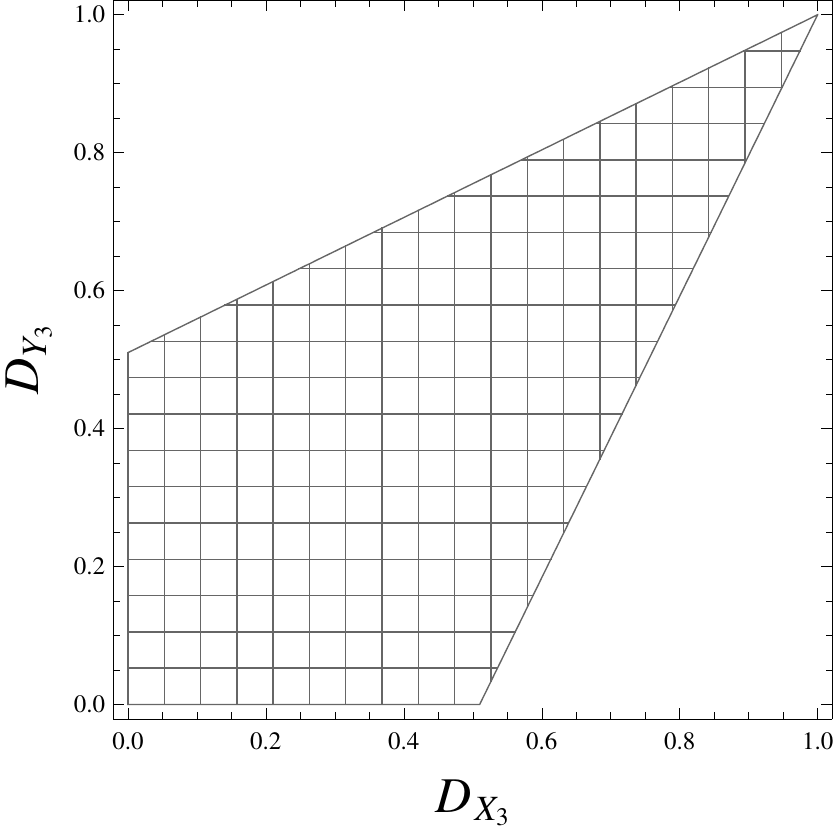}}
\caption{Feasible region of the distortion pair.}
\label{feasible}
\end{figure}

At the second stage of DiSAC2, we define the \textit{feasible region} as:
\begin{equation}
\left\{(D_{X_2},D_{Y_2}):\ D_{X_1}\leq 1\,\cap\,D_{Y_2}\leq \sigma_1^2\right\}.
\label{eq3.4.1.1}
\end{equation}

Similarly, the \textit{feasible region} at the third stage of DiSAC2 can be defined as:
\begin{equation}
\left\{(D_{X_3},D_{Y_3}):\ D_{X_1}\leq 1\,\cap\,D_{Y_2}\leq \sigma_1^2\,\cap\,D_{X_3}\leq \sigma_2^2\right\}.
\label{eq3.4.1.2}
\end{equation}

Fig.~\ref{feasible1} - Fig.~\ref{feasible3} show the \textit{feasible region} at the second stage drawn according to (\ref{eq3.4.1.1}). It can be seen that the \textit{feasible region} gradually shrinks to the axis $D_{X_2}=D_{Y_2}$ as $\rho$ increases, and finally ends up in the extreme situation when $\rho=1$. Particularly, $D_{X_2}=D_{Y_2}\in (0,1)$ is always feasible, regardless of $\rho$. Fig.~\ref{feasible4} illustrates the \textit{feasible region} at the third stage drawn according to (\ref{eq3.4.1.2}), which is shown to be identical with the one at the second stage (Fig.~\ref{feasible2}).

\subsubsection{Rate allocation between two channels}
\label{sec3.4.2}
The rate allocation between two channels is important as it is key for energy allocation. Recalling (\ref{eq3.2.1.1}), (\ref{eq3.2.2.1}), (\ref{eq3.2.3.1}) and Fig.~\ref{model}, we denote
\begin{equation*}
R_X=R_1+R_3\quad \textrm{and}\quad R_Y=R_2
\label{eq3.4.2.1}
\end{equation*}
as a representation of the rate for two channels in the three-stage DiSAC2 scheme.

The dashed line of Fig.~\ref{rateallocation} shows the admissible rate pair $(R_X,\,R_Y)$ of the three-stage DiSAC2, for the distortion pair $D_{X_3}=D_{Y_3}=0.5$ and $\rho=0.6$. It can be seen that the rate pair $(R_X,\,R_Y)$ is configurable while the sum-rate $R_X+R_Y$ is kept to be a constant, and there exists a particular point to achieve $R_X=R_Y$.

\begin{figure}[!ht]
\centering
\includegraphics[width=2.0in]{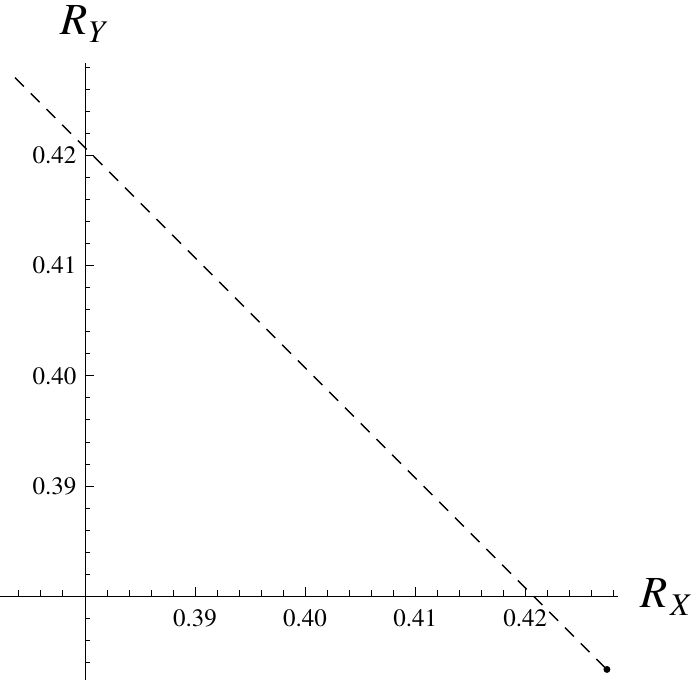}
\caption{The admissible rate pair $(R_X,\,R_Y)$ of the three-stage DiSAC2, for the distortion pair $D_{X_3}=D_{Y_3}=0.5$ and $\rho=0.6$.}
\label{rateallocation}
\end{figure}

\subsubsection{Potential energy saving}
\label{sec3.4.3}
let us revisit channel capacity as a function of energy. For the discrete time Gaussian channel, the capacity~\cite[P.249]{Cover2006} is
\begin{equation*}
C=\frac{1}{2}\log (1+\frac{E_s}{N_s}) \quad \textrm{bits per transmission},
\label{eq3.4.3.1}
\end{equation*}
where $E_s$ is the \textit{energy per sample}, and $N_s$ is the \textit{noise variance per sample}.

We consider further a band-limited channel (bandwidth $W$) with white noise (spectral density $N_0/2$), and each sample occupies a time interval $T$. In this case, the \textit{noise variance per sample} is $\frac{N_0}{2}2W\frac{T}{2WT} = N_0/2$. Hence, the capacity is
\begin{equation*}
C = \frac{1}{2}\log (1+\frac{E_s}{N_0/2})\quad \textrm{bits per sample},
\label{eq3.4.3.2}
\end{equation*}
which can be rewritten as
\begin{equation*}
E_s = \left(e^{2C}-1\right) N_0/2 \quad \textrm{per sample}.
\label{eq3.4.3.3}
\end{equation*}

Due to the successive approximation structure of the DiSAC2, there is no interference between successive transmissions. Therefore, assuming that the channels from each source to the base station are independent, Gaussian band-limited, with the same bandwidth $W$ and noise spectral density $N_0/2$, the sum-energy for the three-stage DiSAC2 can be modeled as
\begin{equation}
\begin{split}
E_{\textrm{DiSAC2}} &= \sum_{i=1}^3 \left(e^{2C_i}-1\right) N_0/2 \\
 &= \sum_{i=1}^3 \left(e^{2R_i}-1\right) N_0/2 \\
 &= \left(\frac{1}{D_{X_1}}+\frac{\sigma_1^2}{D_{Y_2}}+\frac{\sigma_2^2}{D_{X_3}}-3\right) N_0/2,
\end{split}
\label{eq3.4.3.4}
\end{equation}
where the second equality follows assuming that all rates match the capacity for each channel usage (the Shannon channel coding theorem is met perfectly).

For the DSC2 scheme, with similar assumptions as above, we can get
\begin{equation}
\begin{split}
E_{\textrm{DSC2}} &= \left(e^{2R_{X}}+e^{2R_{Y}}-2\right) N_0/2 \\
 &\geq \left(\sqrt{\frac{\left(1-\rho ^2\right) \left(\sqrt{\frac{4 D_{X} D_{Y} \rho ^2}{\left(1-\rho ^2\right)^2}+1}+1\right)}{2 D_{X} D_{Y}}}-1\right) N_0 \\
 & \qquad\qquad\qquad\qquad\qquad\qquad\qquad,
\end{split}
\label{eq3.4.3.5}
\end{equation}
where the second inequality follows from the fact that $R_{X}+R_{Y}$ is constrained by the minimum sum-rate (\ref{eq3.3.1}).

\begin{figure}[!ht]
\centering
\includegraphics[width=2.8in]{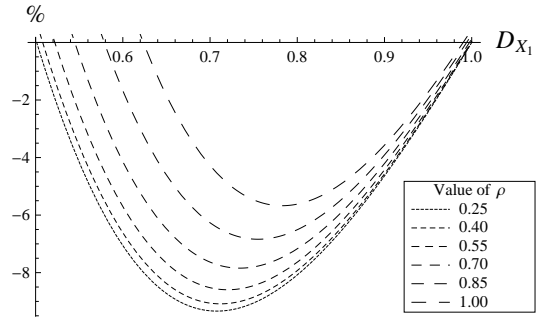}
\caption{The energy saving $(E_{{\rm DiSAC2}}-\min E_{{\rm DSC2}})/E_{{\rm DiSAC2}}$ of the three-stage DiSAC2, for $D_{X}=D_{Y}=0.5$ and different $\rho$. The DiSAC2 is better than DSC2 in terms of transmission power. Notice that $D_{X_1}$ has a lower bound to satisfy the distortion constraints, which varies with $\rho$.}
\label{saving}
\end{figure}

To compare the two schemes, we assume that the goal for both is to convey the source $X$ and $Y$ to the receiver while attaining a given distortion pair $(D_X,D_Y)$. Thus, by plugging $D_{X_3}=D_X,D_{Y_3}=D_Y$ into (\ref{eq3.4.3.4}), $E_{\textrm{DiSAC2}}$ can be reduced to a function of a single variable $D_{X_1}$. We plot $\left(E_{\textrm{DiSAC2}}-\min E_{\textrm{DSC2}}\right)/E_{\textrm{DiSAC2}}$ for $D_{X}=D_{Y}=0.5$ and different $\rho$ in Fig.~\ref{saving}, in which the transmission power of the three-stage DiSAC2 is less than DSC2.

Since we use the non-interference model for the DSC2 scheme, which essentially consumes less than any other interference model (e.g., Gaussian multiple access channels~\cite[P.405]{Cover2006}), Fig.~\ref{saving} suggests that the three-stage DiSAC2 is potentially better than any conventional DSC2 scheme, in terms of transmission power. Further study of this energy saving effect is still in progress.

\section{Extension to the case of many encoders}
\label{sec4}
We have addressed the DiSAC2 scheme for the two-encoder case in the previous sections. By similar reasoning, we can extend this to a general scheme with any number of encoders (\textit{DiSACn}). Fig.~\ref{threeENC} illustrates the setup for three encoders --- DiSAC3: three encoders take turns to send messages, the BS successively decodes the messages to get better reconstruction of $(X,Y,Z)$ stage by stage. The central principle to design the coding procedure for such a scheme is: each encoder should fully exploit the joint information between the local source, and the coded messages that have already been sent (including the overheard messages and self-generated messages).

\begin{figure}[!ht]
\centering
\includegraphics[width=3.2in]{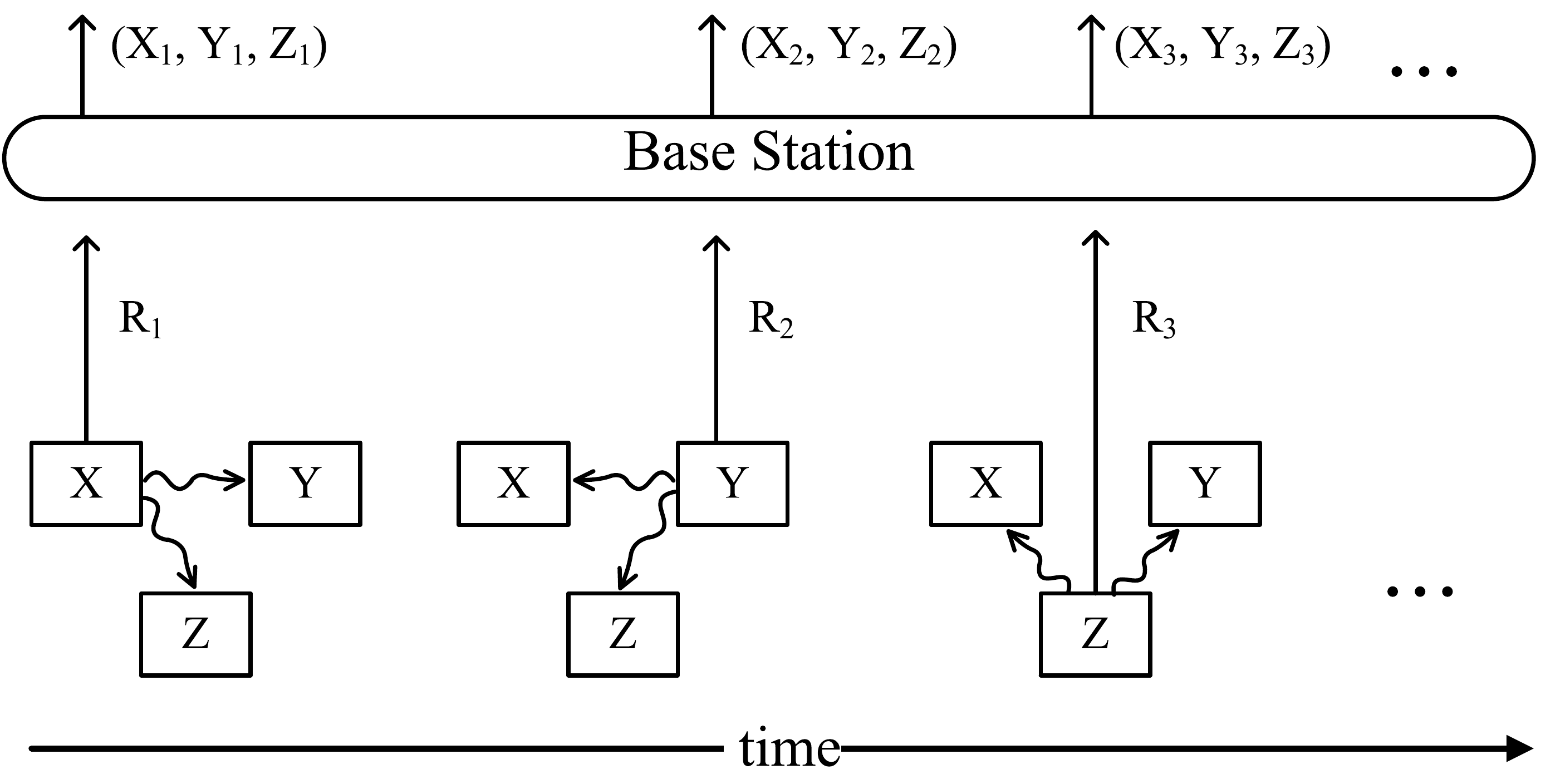}
\caption{Three-encoder distributed successive approximation coding using broadcast advantage. The straight lines represent the transmission from the encoder to the base station, and the wavy lines represent the same message that is overheard by the other two encoders.}
\label{threeENC}
\end{figure}

The minimum sum-rate for distributed source coding with three and more encoders is determined in \cite{Wagner2008} for a symmetric case: the source components are Gaussian, exchangeable, positively correlated, and all the target quadratic distortions are equal. It turns out that the scheme in Fig.~\ref{threeENC} with three stages can achieve this minimum sum-rate\footnote{Proof is omitted due to space, see \cite{Chen2010b} for details.}. It is interesting to study the successive refinement of this setup for this symmetric case and also more general conditions, which is the subject of on-going work~\cite{Chen2010b}.

\section{Conclusions}
\label{sec5}
We introduced a new distributed source coding scheme for two encoders (DiSAC2), which takes advantage of the free broadcast nature present in wireless networks. The coding procedure imitates the ping-pong game, and has a successive approximation structure. For the quadratic Gaussian case, it is proved that the three-stage DiSAC2 is successively refinable on the \textit{\{sum-rate, distortion pair\}} surface, which is characterized by the rate-distortion region of DSC2. The transmission power analysis also shows a possible energy saving over a conventional distributed coding setup.

We are working on the successive refinement conjecture of the $n$-stage DiSAC2, which is a generalization of Theorem~\ref{theoremsuccessive} to the case $n$. The extension of DiSAC2 to three and more encoders is illustrated, and studying its properties is left for future work. Last but not least, the broadcast advantage is expected to simplify the encoding/decoding in image coding, therefore indicating that such a scheme has potential in multi-camera networks. A practical distributed image coding method will be the best verification of its theoretical value.

\section*{Acknowledgement}
The authors would like to thank Prof. Michael Gastpar for the insightful discussion on the initial draft of this paper.

\appendix[The explicit expression of $\mu_3$ and $\sigma^2_3$]
{\footnotesize
\begin{multline*}
\mu_3=\Big( Y_2 \left(\rho ^2-1\right) \left(\rho ^2 D_{X_1}-\rho^2+1\right)- \\
    \rho D_{Y_2} \left(X_3 \left(\rho ^2 D_{X_1}-\rho^2+1\right)+
   X_1 \left(\rho ^2-1\right)\right)\Big) \bigg/ \\
   \Big(\rho ^2 D_{X_1}\left(-D_{Y_2}+\rho ^2-1\right)-\left(\rho ^2-1\right)^2\Big).
\end{multline*}}
{\footnotesize
\begin{multline*}
\sigma^2_3= D_{Y_2} \left(\rho ^2 D_{X_1}-\rho ^2+1\right) \Bigg(
    \left(1-\rho ^2\right) \left(\rho ^2 D_{X_3} D_{Y_2}+\left(\rho^2-1\right)^2\right)+ \\
   \rho ^2 D_{X_1} \left(D_{Y_2} \left(\rho ^2 D_{X_3}-\rho
   ^2+1\right)+\left(\rho ^2-1\right)^2\right)\Bigg)\Bigg/  \\
    \left(\rho ^2 D_{X_1} \left(D_{Y_2}-\rho
   ^2+1\right)+\left(\rho ^2-1\right)^2\right)^2.
\end{multline*}}

\bibliographystyle{IEEEtran}
\bibliography{IEEEabrv,F:/Work/LCAV/Papers/papers}

\end{document}